\documentclass[aps,preprint,pra,showpacs]{revtex4}%

\usepackage[centertags]{amsmath}
\usepackage{amsthm}
\usepackage{newlfont}

\usepackage{graphicx}
\usepackage[pdftex]{color}
\usepackage{amssymb}
\usepackage{amsfonts}
\usepackage{amsmath}%
\usepackage{color}
\setcounter{MaxMatrixCols}{30}
\providecommand{\U}[1]{\protect\rule{.1in}{.1in}}

\def\be{\begin{equation}}
\def\ee{\end{equation}}
\def\bea{\begin{eqnarray}}
\def\eea{\end{eqnarray}}

\newtheorem{thm}{Theorem}[section]

\newtheorem{defn}[thm]{Definition}

\begin{document}
\title{Entangled Histories vs. the Two-State-Vector Formalism - Towards a Better Understanding of Quantum Temporal Correlations}
\author{Marcin Nowakowski\footnote{Electronic address: marcin.nowakowski@pg.edu.pl}}
\affiliation{Faculty of Applied Physics and Mathematics, ~Gdansk University of Technology, 80-952 Gdansk, Poland}
\affiliation{National Quantum Information Center, Andersa 27, 81-824 Sopot, Poland}
\author{Eliahu Cohen}
\affiliation{Physics Department, Centre for Research in Photonics, University of Ottawa,
Advanced Research Complex, 25 Templeton, Ottawa ON Canada, K1N 6N5}
\author{Pawel Horodecki}
\affiliation{National Quantum Information Center, Andersa 27, 81-824 Sopot, Poland}
\affiliation{Faculty of Applied Physics and Mathematics, ~Gdansk University of Technology, 80-952 Gdansk, Poland}


\begin{abstract}
The Two-State-Vector formalism and the Entangled Histories formalism are attempts to better understand quantum correlations in time. Both formalisms share some similarities, but they are not identical, having subtle differences in their interpretation and manipulation of quantum temporal structures. However, the main objective of this paper is to prove that, with appropriately defined scalar products, both formalisms can be made isomorphic. We show, for instance, that they treat operators and states on equal footing leading to the same statistics for all measurements. In particular, we discuss the topic of quantum correlations in time and show how they can be generated and analyzed in a consistent way using these formalisms. Furthermore, we elaborate on a novel behavior of quantum histories of evolving multipartite systems which do not exhibit global non-local correlations in time but nevertheless can lead to entangled reduced histories characterizing evolution of an arbitrarily chosen sub-system.
\end{abstract}

\pacs{03.65.Ud, 03.65.Ta, 03.67.Hk}
\maketitle

\section{Introduction}\label{sec1}

Quantum correlations, both spatial and temporal are unique. They create a clear distinction
between classical, quantum and post-quantum theories. For analyzing quantum
correlations in time, two apparently different formalisms have been proposed (among several others) - Multiple-Time States by Aharonov {\it et al.} \cite{AAD,MTS1} as part of the Two-State-Vector formalism (TSVF) \cite{ABL,Properties,TSVFR,TTI}, and the Entangled Histories (EH) approach due to Cotler and Wilczek \cite{Cot}. These two formalisms not only provide a richer
notion of the history of a quantum system, but also allow to study the intricate temporal correlations it entails.
Both approaches have shown to be very fruitful: The TSVF led to surprising effects within pre- and postselected systems (e.g. \cite{Paradoxes,Pigeon,Dis}), time travel thorough post-selected teleportation \cite{Lloyd1,Lloyd2}, a novel notion of quantum time \cite{Each}, new results regarding quantum state tomography \cite{MTS2} and a better understanding of processes with indefinite causal order \cite{NewSandu}, while the Entangled Histories approach led to Bell tests for histories \cite{WC3} and monogamy of quantum entanglement in time \cite{Now2}. Together, they have been recently used for analyzing the final state proposal in black holes \cite{CN}.

In the following section we revisit these two formalisms from a modern perspective, analyzing and extending their main features.  The united approach we develop is used for deriving new results regarding quantum correlations in time within Sec. III and the Appendix. Although both formalisms are not completely equivalent due to minor conceptual differences and their different definitions of inner products, we prove in Sec. IV that they can be made isomorphic under certain conditions. We conclude with some general remarks regarding the nature of quantum time.

\section{Multiple-Time States vs. Entangled Histories}

We shall begin with the mathematical construction of the two formalisms. Both seek to encode the evolution of the system through time in a complete, yet still compact way. They do so in quite a similar manner, but with a subtle difference.

Multiple-Time States (MTS) extend the standard quantum mechanical state by allowing its simultaneous description in several different moments. Moreover, such a multiple-time state may encompass both forward- and backward-evolving states on equal footing. The motivation for allowing so is restoring time-symmetry in quantum mechanics, which is apparently lost upon collapse \cite{ABL}. Thus, MTS represent all instances of collapse (i.e. those moments in time when the quantum state coincided with an eigenstate of some measured operator) and allow them to evolve both forward and backward in time. This evolution backwards in time can be understood literally (giving rise to the Two-Time Interpretation \cite{TTI}), but this is not necessary, it can be simply regarded as a mathematical feature of the formalism (which is, in fact, equivalent to the standard quantum formalism \cite{TSVFR}). MTS live in a tensor product of Hilbert spaces $\mathcal{H}$ admissible at those various instances of time ($t_1<...<t_n$) denoted by \cite{MTS1}
\begin{equation}\label{HMTF}
\mathcal{H}=\mathcal{H}_{t_{n}}^{(\cdot)}\otimes...\otimes\mathcal{H}_{t_{k+1}}^{\dagger}\otimes\mathcal{H}_{t_{k}}\otimes\mathcal{H}_{t_{k-1}}^{\dagger}\otimes...\otimes\mathcal{H}_{t_{1}}^{(\cdot)},
\end{equation}
where a dagger means the corresponding Hilbert space consists of states which evolve backwards in time. The initial and final Hilbert spaces might be daggered or not (this is denoted by a ``$\cdot$'' superscript). All Hilbert spaces containing either (forward-evolving) kets or (backward-evolving) bras are alternating to allow a time-symmetric description at any intermediate moment. An example of (a separable) MTF would be: $_{t_4}\langle z^+||x^-\rangle_{t_3~t_2}\langle y^-||x^+\rangle_{t_1} \in \mathcal{H}_{t_{4}}^\dagger\otimes\mathcal{H}_{t_{3}}\otimes\mathcal{H}_{t_{2}}^\dagger\otimes\mathcal{H}_{t_{1}}$. This multiple-time state represents an initial eigenstate of the Pauli-X operator evolving forward in time from $t_1$ until collapse into an eigenstate of the Pauli-Y operator occurs at time $t_2$. Later on, at time $t_3$ the system is projected again onto a different eigenstate of the Pauli-X operator. Finally at $t_4$ the system is measured in the Z basis, and the resulting eigenstate evolves backward in time. In the following we will focus on two-time states (sometimes called two-states), which consist of a forward evolving state $|\psi_1\rangle_{t_{1}}$ and a backward evolving state $|\psi_2\rangle_{t_{2}}$ in the above form $_{t_1}\langle \psi_1| |\psi_2\rangle_{t_2}$ to achieve a richer description of a quantum system during the time interval $t_1\le t \le t_2$ \cite{TSVFR}. We would hence omit the obvious sub-indices $t_i$.

Given an initial state $|\Psi\rangle$ and a final state $\langle \Phi|$, the probability that an intermediate measurement of some hermitian operator $A$ will result in the eigenvalue $a_n$ is given by the ABL formula \cite{ABL}

\begin{equation} \label{ABL}
p(A=a_n)=\frac{1}{N}|\langle\Phi|U_{2}P_{n}U_{1}|\Psi\rangle|^2,
\end{equation}
where $U_i$ represent unitary evolution, the operator $P_n$ projects on $|a_n\rangle$ and
\begin{equation}
N \equiv \sum_k |\langle\Phi|U_{2}P_{k}U_{1}|\Psi\rangle|^2.
\end{equation}
This probability rule is important in that it uses the information available through the final state in a way which is manifestly time-symmetric.

Let us examine now the entangled histories (EH) approach. Its predecessor, the decoherent histories theory (or consistent histories theory) has a long tradition \cite{Hartle1, Hartle2, Gri84, Rep1, Rep2} and is built on the grounds of the well-known and broadly applied Feynman's path integral theory for calculation of probability amplitudes of quantum processes, especially in quantum field theory. It is presented also as a generalization of quantum mechanics applied to closed systems such as the universe as a whole and discussed as a necessary element of future quantum gravity theory \cite{Hartle1}. The EH formalism extends the concepts of the consistent histories theory by allowing for complex superposition of histories. Contrary to the TSVF, in the EH approach there is no notion of backwards evolution and hence all Hilbert spaces and all states are treated on equal footing \cite{Cot} \textit{keeping time-symmetry}. A history state is understood as an element in $\text{Proj}(\mathcal{H})$, spanned by projection operators from $\mathcal{H}$ to $\mathcal{H}$, where here $\mathcal{H}=\mathcal{H}_{t_{n}}\odot...\odot\mathcal{H}_{t_{1}}$. The $\odot$ symbol, which we use to comply with the current literature, stands for sequential tensor products, and has the same meaning as the above $\otimes$ symbol. A typical (separable) history state would be therefore denoted as: $|H)=P_{n}\odot...\odot P_{1} \in \text{Proj}(\mathcal{H}_{t_{n}}\odot...\odot\mathcal{H}_{t_{1}})$, where a physical system is understood to have a property $P_{i}$ at time $t_i$. As an example, one can take a history $|H)=[z^+]\odot[x^-]\odot[y^-]\odot[x^+]=[|z^+\rangle\langle z^+|]\odot[|x^-\rangle\langle x^-|]\odot[|y^-\rangle\langle y^-|]\odot[|x^+\rangle\langle x^+|]$ for a spin-$\frac{1}{2}$ particle being in an eigenstate of the Pauli-X operator at time $t_1$, in an eigenstate of the Pauli-Y operator at time $t_2$, and so on. Within this formalism one also defines the unitary bridging operators $\mathcal{T}(t_j,t_i):\mathcal{H}_{t_i}\rightarrow\mathcal{H}_{t_j}$ evolving the states between instances of time, and having the following properties: $\mathcal{T}(t_j,t_i)=\mathcal{T}^{\dagger}(t_i,t_j)$ and $\mathcal{T}(t_j,t_i)=\mathcal{T}(t_j,t_{j-1})\mathcal{T}(t_{j-1},t_i)$.

Following the consistent histories theory, the alternatives at a given instance of time form an exhaustive orthogonal set of projectors  $\sum_{\alpha_{x}}P_{x}^{\alpha_{x}}=\mathbb{I}$ and for the sample space of entangled histories $|H^{\overline{\alpha}})=P_{n}^{\alpha_{n}}\odot P_{n-1}^{\alpha_{n-1}}\odot\ldots\odot P_{1}^{\alpha_{1}}\odot P_{0}^{\alpha_{0}}$ ($\overline{\alpha}=(\alpha_{n}, \alpha_{n-1},\ldots, \alpha_{0})$), there exists $c_{\overline{\alpha}} \in \mathbb{C}$ such that $\sum_{\overline{\alpha}}c_{\overline{\alpha}}|H^{\overline{\alpha}})=\mathbb{I}$.

The formalism of consistent histories introduces also the chain operator $K(|H^{\overline{\alpha}}))$, which can be directly associated with a time propagator of a given quantum process:
\begin{equation}
K(|H^{\overline{\alpha}}))=P_{n}^{\alpha_n}\mathcal{T}(t_{n},t_{n-1}) P_{n-1}^{\alpha_{n-1}}\ldots\mathcal{T}(t_{2},t_{1})P_{1}^{\alpha_1}\mathcal{T}(t_{1},t_{0})P_{0}^{\alpha_0}.
\end{equation}
This operator plays a fundamental role in measuring a weight of any history $|H^{\alpha})$:
\begin{equation}
W(|H^{\alpha}))=TrK(|H^{\alpha}))^{\dagger}K(|H^{\alpha})),
\end{equation}
which can be interpreted as a realization probability of a history (see also the relation with sequential weak values \cite{GC}).

The chain operator could be used also for defining an inner semi-definite product for any two histories $|H)$ and $|Y)$ belonging to the same consistent family of histories:

\begin{equation}
(H|Y)_K=Tr[K^\dag(|H))K(|Y))],
\end{equation}
whose role will be discussed further in this paper.
Recent years have encompassed also an extensive discussion regarding the so-called consistency or decoherence of allowed histories \cite{Hartle1, Hartle2}, which is directly related to the degree of interference between pairs of histories within the set of histories. The consistent histories framework assumes that the family
of histories is consistent, i.e. one can associate with a union of histories a weight equal to the sum of weights
associated with particular histories included in the union. The histories belonging to the consistent family should also meet the strong consistency condition: $(H^{\alpha}|H^{\beta})_K=0$ for $\alpha\neq\beta$, although it is an open question whether weakened variants of this condition suffice for preservation of the probability distribution over a set of allowed histories and the orthonormality of the consistent family \cite{WC1, WC2, Hartle1, Hartle2}.

\section{Non-local correlations in time}
Both formalisms lead to a fundamental discussion about non-local correlations in time introducing mathematical structures which make these considerations well-founded.
Following the discussion about MTS, we can consider a quantum state exhibiting quantum entanglement in time with two times $t_2>t_1$, in TSVF representation:
\begin{equation}\label{state1}
\mathcal{H}_{t_{2}}^{\dagger}\otimes\mathcal{H}_{t_1}\ni|\Psi_{t_{2}t_{1}}\rangle\rangle=\alpha_0 \langle\Phi_0||\Psi_0\rangle+\alpha_1 \langle\Phi_1||\Psi_1\rangle,
\end{equation}
with non-zero probability amplitudes $\alpha_0, \alpha_1\in\mathbb{C}$ (some texts use a different notation for two-time states, but the current ``double ket'' will turn out to be useful later on).

The two-time state $|\Psi_{t_{2}t_{1}}\rangle\rangle$ can be represented as an entangled history in the EH formalism, assuming a unitary evolution $U(t_2,t_1)$ between the two times:
\begin{equation}\label{history1}
|H_{t_{2}t_{1}})=\alpha_{0}[|\Phi_0\rangle\langle\Phi_0|]\odot[|\Psi_0\rangle\langle\Psi_0|]+\alpha_{1}[|\Phi_1\rangle\langle\Phi_1|]\odot[|\Psi_1\rangle\langle\Psi_1|],
\end{equation}
and as we show next, there exists a natural isomorphism between the two state spaces: $\mathcal{H}_{t_{2}}^{\dagger}\otimes\mathcal{H}_{t_1}\ni|\Psi_{t_{2}t_{1}}\rangle\rangle \leftrightarrow|H_{t_{2}t_{1}})\in Proj(\mathcal{H}_{t_{2}})\odot Proj(\mathcal{H}_{t_1})$.

Let us slightly modify the evolution of the system, supposing that at some time $t_1<t<t_2$ an observable $A$ is measured and the measured eigenvalue $a_{n}$ corresponds to the projector $P_{n}$, then the probability of a realized eigenvalue $a_{n}$ is:

\begin{equation}
P(A=a_{n})=|\alpha_0 \langle\Phi_0|U_{2}P_{n}U_{1}|\Psi_0\rangle+\alpha_1 \langle\Phi_1|U_{2}P_{n}U_{1}|\Psi_1\rangle|^2,
\end{equation}
which is equivalent to the ABL formula in (\ref{ABL}) when $N=1$. Here the unitary operators correspond to $U_1=U(t,t_1)$ and $U_2=U(t_2,t)$, which can be treated as bridging operators for the new entangled history.
On the other hand, an action of the projection operator at time $t_1<t<t_2$ can be represented by a modification of the previous history with an ``injected'' operation between times $t_1$ and $t_2$:
\begin{equation}\label{enthistory}
|\tilde{H}_{t_{2}t_{1}})=\alpha_{0}[|\Phi_0\rangle\langle\Phi_0|]\odot P_{n}\odot[|\Psi_0\rangle\langle\Psi_0|]+\alpha_{1}[|\Phi_1\rangle\langle\Phi_1|]\odot P_{n}\odot[|\Psi_1\rangle\langle\Psi_1|],
\end{equation}
which leads to the same probability distribution of the realized history as can be found by employing the ABL formula. In this case it can be calculated as the aforementioned weight of the history \cite{WC1}:
\begin{equation}
Pr(|\tilde{H}_{t_{2}t_{1}}))=Tr K(|\tilde{H}_{t_{2}t_{1}}))^\dagger K(|\tilde{H}_{t_{2}t_{1}})).
\end{equation}

It is worth noting that the history $|\tilde{H}_{t_{2}t_{1}})$ is unnormalized (which is helpful for calculation of its realization probability by means of the $K(\cdot)$ operator). 
As in the spatial case, its normalization is straightforward with application of the inner product for calculation of probabilities: $|H^*_{t_{2}t_{1}})=\frac{|\tilde{H}_{t_{2}t_{1}})}{\sqrt{(\tilde{H}_{t_{2}t_{1}}|\tilde{H}_{t_{2}t_{1}})_K}}$.

It is interesting to find out how such an entangled state could be generated in reality (see the Appendix for an explicit example). Following the line of reasoning for TSVF in \cite{MTS1}, one can consider a bipartite system consisting of the system S and the ancilla A. The composite system is initiated at time $t_1$ in a state $\mathcal{H}_{S}\otimes\mathcal{H}_{A}\ni|\Psi\rangle=\lambda_0|\Psi_{0}0\rangle + \lambda_1|\Psi_{1}1\rangle$ and is post-selected at time $t_2$ in a state $|\Phi\rangle=\beta_0|\Phi_{0}0\rangle + \beta_1|\Phi_{1}1\rangle)$ that leads to the history $|H_{SA})=[\Phi]\odot[\Psi]$ and to the corresponding two-time state: $|\Psi_{SA}\rangle\rangle=\langle\Phi ||\Psi\rangle$.

Considering the bipartite system SA in a state $|\Psi_{SA}\rangle\rangle$, we perform a measurement only on the system $S$ leaving the ancilla undisturbed ($I_A$ stands for the identity operation acting on the ancilla A):
\begin{eqnarray}
P(A=a_{n})&=&  \nonumber\\
&=&|(\beta_{0}^*\langle\Phi_{0}0| + \beta_{1}^*\langle\Phi_{1}1|)[U_2\otimes I_{A}] [P_N\otimes I_{A}] [U_1\otimes I_{A}](\lambda_{0}|\Psi_{0}0\rangle + \lambda_{1} |\Psi_{1}1\rangle)|^2 \nonumber\\
&=&| \beta_{0}^*\lambda_{0}\langle\Phi_0|U_{2}P_{n}U_{1}|\Psi_0\rangle+ \beta_{1}^*\lambda_{1}\langle\Phi_1|U_{2}P_{n}U_{1}|\Psi_1\rangle|^2 \nonumber\\
\end{eqnarray}
exhibiting destructive interference for the ancillary system's orthogonal states at times $t_2$ and $t_1$. Due to this purely quantum effect and with appropriate adjustment of probability amplitudes ($\beta_{0}^*\lambda_{0}=\alpha_0$ and $\beta_{1}^*\lambda_{1}=\alpha_1$), represented in time by states of the form (\ref{state1}) and (\ref{history1}), we generate a probability distribution characteristic of entangled quantum states.

From an allowed history perspective (i.e. a history with non-zero realization probability), this evolution of the system and ancilla can be represented by a history:
\begin{equation}\label{history2}
|H_{SA})=\gamma[|\Phi\rangle\langle\Phi|]\odot [P_N\otimes I_{A}]\odot [|\Psi\rangle\langle\Psi|],
\end{equation}
where $\gamma$ stands for a normalization factor. On top of that, the evolution is determined by the bridging unitary operators: $B(t_2,t)=U_2\otimes I_{A}$ and $B(t,t_1)=U_1\otimes I_{A}$.


Yet we observe that the history (\ref{history2}) of the composite system $SA$ is a separable vector and as such does not indicate entanglement in time so one can raise a question: \textit{how can we derive quantum entanglement in time of the form (\ref{enthistory}) from this history (not breaking the concept of monogamy of entanglement and the general rules of contracting tensored spaces of quantum states)?}

It is simple to show for separable quantum states of a multipartite system that tracing out some of its sub-systems or contracting the global space of the system, in which the state lives, does not generate quantum entanglement.
In the case of the entangled history $|H_{SA})$ in time, tracing out the ancilla from the history requires application of the partial trace operation on a spatial component $A$ through all time frames. Yet, due to the bridging operators linking particular observation times, partial tracing in time \textit{cannot} be a mere analogy of the spatial case, i.e.:
\begin{equation}
Tr_A(\cdot)=\sum_{ijk} \langle i|_{t_2}\odot\langle j|_{t}\odot\langle k|_{t_1}  (\cdot) |i\rangle_{t_2}\odot|j\rangle_{t}\odot|k\rangle_{t_1}.
\end{equation}
It should be rather an operation capturing information about the evolution of the traced out sub-system during the time of an analyzed history:

\begin{defn}
For a history $|H_{SA})$ living in a space $\text{Proj}(\mathcal{H})=\text{Proj}(\mathcal{H}_{t_{n}}\otimes\dots\otimes\mathcal{H}_{t_{1}})$, a partial trace over its sub-system $A$ at all times $\{t_{j}\ldots t_{i+1} t_{i}\}$ $(j\geq i)$ is:
\begin{equation}\label{PartialTrace}
Tr_{A} |H_{SA})(H_{SA}|=\sum_{k=1}^{\dim\mathcal{F}} (e_{k}|H_{SA})_{K}(H_{SA}|e_{k})_{K},
\end{equation}
where $\mathcal{F}=\{|e_{k})\}$ creates an orthonormal consistent family \footnote{The family of consistent histories is a set of histories $\mathcal{F}=\{|H^{\overline{\alpha}})\}_{\overline{\alpha}=(\alpha_{n}, \alpha_{n-1},\ldots, \alpha_{0})}$, such that $\sum_{\overline{\alpha}}c_{\overline{\alpha}}|H^{\overline{\alpha}})=\mathbb{I}\; \text{for}\; c_{\overline{\alpha}} \in \mathbb{C}$ and any pair of histories from the set meets the consistency condition.} of histories on times $\{t_{j}\ldots t_{i+1} t_{i}\}$ of the system $A$ and the strong consistency condition for partial histories holds for the base histories, i.e. $(e_{i}|e_{j})_K=Tr[K(|e_{i}))^{\dag}K(|e_{j}))]=\delta_{ij}$.
\end{defn}
This definition is built in analogy to a partial trace operator over chosen  times of a history defined in \cite{Now2}. It becomes clearer in the example of history state $|H_{SA})$ why it is fundamental to look into evolution of the traced out parties.

Let us expand now a history $|H^*_{SA})$ being a simplification of $|H_{SA})$ (all internal phase factors are now equal):
\begin{eqnarray}
|H^*_{SA})&=&\gamma[|\Phi_{0}0\rangle\langle\Phi_{0}0| + |\Phi_{0}0\rangle\langle\Phi_{1}1|+|\Phi_{1}1\rangle\langle\Phi_{0}0|+|\Phi_{1}1\rangle\langle\Phi_{1}1|] \odot \\\nonumber
&&\odot[P_N\otimes I_{A}]\odot \\ \nonumber &&\odot[|\Psi_{0}0\rangle\langle\Psi_{0}0|+|\Psi_{0}0\rangle\langle\Psi_{1}1|+|\Psi_{1}1\rangle\langle\Psi_{0}0|+
|\Psi_{1}1\rangle\langle\Psi_{1}1|] \\\nonumber
\end{eqnarray}
with some normalization factor $\gamma$.
The history components of type: $[|\Phi_{0}0\rangle\langle\Phi_{1}1|]\odot[P_N\otimes I_{A}]\odot[|\Psi_{0}0\rangle\langle\Psi_{0}0|]$ vanish after tracing out the $A$-subsystem but then we find components of the history of the following type: $|h_{SA})=[|\Phi_{1}1\rangle\langle\Phi_{1}1|]\odot[P_N\otimes I_{A}]\odot[|\Psi_{0}0\rangle\langle\Psi_{0}0|]$ which cannot be realized (a history of zero weight, $Pr(|h_{SA}))=0$) due to the aforementioned bridging operators $B(t_2,t)=U_2\otimes I_{A}$ and $B(t,t_1)=U_1\otimes I_{A}$.

These considerations lead to a substantial difference between tracing out a subsystem at only a given time step and throughout all times of the whole history which requires to take into account evolution of the traced-out part. If we ``spatially'' trace out this component, this will lead to the reduced component $|h_{S})=[|\Phi_{1}\rangle\langle\Phi_{1}|]\odot[P_N]\odot[|\Psi_{0}\rangle\langle\Psi_{0}|]$ that has some non-zero probability of realization. This contradicts the fact that it is generated from a history $h_{SA}$ which cannot be realized. 
Finally, after tracing out the ancillary system from the history $|H^*_{SA})$, we get the following entangled history in the S system:
\begin{equation}\label{Hsprob}
|H_{S})=\gamma[|\Phi_{0}\rangle\langle\Phi_{0}|]\odot[P_N]\odot [|\Psi_{0}\rangle\langle\Psi_{0}|] +
\gamma[|\Phi_{1}\rangle\langle\Phi_{1}|] \odot [P_N]\odot[|\Psi_{1}\rangle\langle\Psi_{1}|],
\end{equation}
which leads to the same probability distribution for $P_N$ as in the case of the ABL formula applied to the two-time entangled state with a projective measurement between the times $t_1$ and $t_2$. In addition, Eq. \ref{Hsprob} clearly entails a non-local probability distribution characteristic of quantum entanglement.

\section{Isomorphism of the TSVF and the entangled history spaces}

The outline of this section relates to a comparative analysis of the entangled history space and the two-state vector space. Although both formalisms have deep differences rooted in their phenomenological interpretation of the behavior of wave functions traversing forward and backward in time, they lead to the same probability distributions for all measurement setups (as shown in Eq. \ref{Hsprob}) and therefore it should be possible to show that they can be made isomorphic under some conditions. However, the existing literature on these topics suggests that the state spaces generated in both formalisms are not isomorphic due to a lack of a proper inner product in the entangled histories approach.
To prove formally the isomorphism of two inner-product spaces, one therefore needs to equip the EH approach with a scalar product leading to the same results as the MTS approach. Noteworthily, these scalar products bring some physical information about the vectors representing temporal states that has a fundamental meaning as discussed below.

Let us consider then the behavior of scalar products in both formalisms starting with a simplified version of two-time states and histories. The following considerations can be easily extended to the multi-time case.
For the TSVF, a scalar product of a pair of vectors $|\Psi\rangle\rangle$ and $|\Phi\rangle\rangle$ in a space $\mathcal{M}=\mathcal{H}_{t_{2}}^{\dagger}\otimes\mathcal{H}_{t_1}$ with a basis $\mathcal{B}=\{\langle\phi_{i}^{2}||\phi_{j}^{1}\rangle\}$:
\begin{equation}
|\Psi\rangle\rangle=\sum_{ij}\alpha_{ij}\langle\phi_{i}^{2}||\phi_{j}^{1}\rangle
\end{equation}
\begin{equation}
|\Phi\rangle\rangle=\sum_{kl}\alpha_{kl}\langle\phi_{k}^{2}||\phi_{l}^{1}\rangle,
\end{equation}
is defined as follows \cite{Reznik}:
\begin{equation}
\langle\langle\Phi|\Psi\rangle\rangle=\sum_{ijkl}\alpha_{ij}\alpha^{*}_{kl}\langle\phi_{i}^{2}|\phi_{k}^{2}\rangle\langle\phi_{l}^{1}|\phi_{j}^{1}\rangle
=\sum_{ijkl}\alpha_{ij}\alpha^{*}_{kl}\delta_{ik}\delta_{jl}.
\end{equation}

An inner semi-definite product for history vectors $|\Psi)$ and $|\Phi)$ belonging to a space $\mathcal{E}=Proj(\mathcal{H}_{t_{2}})\odot Proj(\mathcal{H}_{t_1})$ in the EH representation:
\begin{equation}\label{Psi}
|\Psi)=\sum_{ij}\alpha_{ij}|\phi_{i}^{2}\rangle\langle\phi_{i}^{2}|\odot|\phi_{j}^{1}\rangle\langle \phi_{j}^{1}|
\end{equation}
\begin{equation}\label{Phi}
|\Phi)=\sum_{kl}\alpha_{kl}|\phi_{k}^{2}\rangle\langle\phi_{k}^{2}|\odot|\phi_{l}^{1}\rangle\langle \phi_{l}^{1}|,
\end{equation}
is defined as follows with application of the chain $K$-operator:
\begin{eqnarray}
(\Phi|\Psi)_K&=&Tr[K^\dag(|\Phi))K(|\Psi))]\nonumber\\
&=&Tr[\sum_{kl}\alpha_{kl}|\phi_{k}^{2}\rangle\langle\phi_{k}^{2}|U|\phi_{l}^{1}\rangle\langle \phi_{l}^{1}|]^\dag
[\sum_{ij}\alpha_{ij}|\phi_{i}^{2}\rangle\langle\phi_{i}^{2}|U|\phi_{j}^{1}\rangle\langle \phi_{j}^{1}|]\nonumber\\
&=&\sum_{ijkl}\alpha_{ij}\alpha^{*}_{kl}\langle\phi_{l}^{1}|U^\dag|\phi_{k}^{2}\rangle\langle\phi_{i}^{2}|U|\phi_{j}^{1}\rangle\delta_{ik}\delta_{jl} \nonumber\\
\end{eqnarray}
and so in general $\langle\langle\Phi|\Psi\rangle\rangle \neq(\Phi|\Psi)_K$, although for trivial bridging operator $U=\mathbb{I}$ they are equal. Thus, both spaces are not isomorphic if equipped with the aforementioned inner products. For making them isomorphic, we have to define a scalar product for the entangled history spaces which would be aligned with that of multi-time states.

\begin{defn}
A scalar product of a pair of history states $|\Psi)$ and $|\Phi)$ in a space $\mathcal{E}=Proj(\mathcal{H}_{t_{n}})\odot\cdots\odot Proj(\mathcal{H}_{t_{2}})\odot Proj(\mathcal{H}_{t_1})$ is defined as:
\begin{equation}
(\Phi|\Psi)_s\equiv Tr[|\Phi)^\dag|\Psi)].
\end{equation}
\end{defn}

This definition for the vectors $|\Phi)$ (\ref{Phi}) and $|\Psi)$ (\ref{Psi}) leads to the same result as in the case of TSVF:
$\langle\langle\Phi|\Psi\rangle\rangle=(\Phi|\Psi)_s=\sum_{ijkl}\alpha_{ij}\alpha^{*}_{kl}\delta_{ik}\delta_{jl}$. It is worth reminding that both vectors $|\Phi)$ and $|\Psi)$ are assumed to be normalized, thus, a probability amplitude for the realization of a particular history does not matter. What matters is a relative amplitude of realization ($|\Phi)$ in relation to $|\Psi)$) and as an implication, $K$-operators do not have to be engaged for such calculation. We can also briefly refer to a physical interpretation of $(\cdot|\cdot)_K$ and $(\cdot|\cdot)_s$. The K-operator folds any base history $|H)=[|\phi_{n}\rangle\langle \phi_{n}|]\odot[|\phi_{n-1}\rangle\langle \phi_{n-1}|]\odot\cdots\odot[|\phi_{0}\rangle\langle \phi_{0}|]$ to an operator $K(|H))=\alpha |x_n\rangle\langle x_0|$  and gives an amplitude of a process associated with the history. Consequently, the product $(\cdot|\cdot)_K$ does not capture the orthogonality relations between two histories for the intermediate time instances, what matter are the initial and final states with the realization amplitudes of the analyzed histories. In case of $(\cdot|\cdot)_s$, one gets a proper inner product for the history spaces which captures the orthogonality relations between analyzed histories for the intermediate states.

Consequently, we are ready to prove the isomorphism of both spaces equipped with appropriate inner products. The following theorem is constructed for two-time spaces but can be easily extended to a multi-time case.

\begin{thm}
A space $\mathcal{M}$ of multi-time state vectors equipped with a scalar product $\langle\langle\cdot|\cdot\rangle\rangle$ is isomorphic to a space $\mathcal{E}$ of entangled histories equipped with a scalar product $(\cdot|\cdot)_s$.
\end{thm}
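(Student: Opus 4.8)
The plan is to establish the isomorphism by exhibiting an explicit linear bijection between $\mathcal{M}$ and $\mathcal{E}$ and then verifying that it preserves the respective inner products, which is precisely what it means for two inner-product spaces to be isomorphic. First I would fix the natural correspondence already foreshadowed in Eqs.~(\ref{state1})--(\ref{history1}): define a map $\mathcal{I}:\mathcal{M}\to\mathcal{E}$ on basis elements by
\begin{equation}
\mathcal{I}\bigl(\langle\phi_{i}^{2}||\phi_{j}^{1}\rangle\bigr)=|\phi_{i}^{2}\rangle\langle\phi_{i}^{2}|\odot|\phi_{j}^{1}\rangle\langle\phi_{j}^{1}|,
\end{equation}
and extend it by linearity, so that $|\Psi\rangle\rangle=\sum_{ij}\alpha_{ij}\langle\phi_{i}^{2}||\phi_{j}^{1}\rangle$ is sent to $|\Psi)=\sum_{ij}\alpha_{ij}|\phi_{i}^{2}\rangle\langle\phi_{i}^{2}|\odot|\phi_{j}^{1}\rangle\langle\phi_{j}^{1}|$. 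The set $\mathcal{B}=\{\langle\phi_{i}^{2}||\phi_{j}^{1}\rangle\}$ is a basis of $\mathcal{M}$, and its image is the set of rank-one projector-valued sequential tensors spanning $\mathcal{E}$; since $\mathcal{I}$ carries a basis to a basis bijectively, it is automatically a linear isomorphism of vector spaces, and in particular $\dim\mathcal{M}=\dim\mathcal{E}$.

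Next I would verify that $\mathcal{I}$ is an isometry for the chosen inner products, i.e.\ that $\langle\langle\Phi|\Psi\rangle\rangle=(\mathcal{I}\Phi|\mathcal{I}\Psi)_s$ for all $|\Psi\rangle\rangle,|\Phi\rangle\rangle\in\mathcal{M}$. By linearity it suffices to check this on basis pairs, and the computation has essentially already been performed in the excerpt: the definition $(\Phi|\Psi)_s\equiv Tr[|\Phi)^\dag|\Psi)]$ applied to the expansions (\ref{Phi}) and (\ref{Psi}) yields
\begin{equation}
(\Phi|\Psi)_s=\sum_{ijkl}\alpha_{ij}\alpha^{*}_{kl}\,\delta_{ik}\delta_{jl},
\end{equation}
which coincides term by term with $\langle\langle\Phi|\Psi\rangle\rangle$. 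The key point is that the $s$-product, unlike the $K$-product, does not thread the bridging operator $U$ through the trace, so the intermediate orthogonality relations $\langle\phi_i^2|\phi_k^2\rangle\langle\phi_l^1|\phi_j^1\rangle=\delta_{ik}\delta_{jl}$ survive exactly as in the TSVF scalar product. I would also record that $(\cdot|\cdot)_s$ is genuinely an inner product (Hermitian, positive-definite) on $\mathcal{E}$ restricted to the relevant subspace, since on the projector basis it reduces to the standard Frobenius/Hilbert--Schmidt form.

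The main obstacle I anticipate is not the algebra but the bookkeeping about what basis and what subspace of $\mathrm{Proj}(\mathcal{H})$ one is really working in. The history space $\mathcal{E}$ is spanned by arbitrary sequential tensors of projectors, whereas $\mathcal{M}$ is built from products of a bra-basis and a ket-basis; I must argue that the homogeneous (diagonal in the intermediate-index sense) projector histories appearing in (\ref{Psi})--(\ref{Phi}) are the correct image, and that $\mathcal{I}$ is well-defined independently of the choice of orthonormal bases $\{|\phi_i^2\rangle\},\{|\phi_j^1\rangle\}$ in each factor. Equivalently, I would need to confirm that $(\cdot|\cdot)_s$ is basis-independent and positive-definite so that it deserves to be called a scalar product rather than merely a sesquilinear form. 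Once these well-definedness and nondegeneracy points are settled, the isomorphism follows immediately: $\mathcal{I}$ is a bijective linear map preserving the inner products, hence an isometric isomorphism of inner-product spaces, and the extension to $n$ times proceeds by the same construction with $n$-fold tensor products, as already noted.
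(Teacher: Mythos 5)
Your proposal is correct and follows essentially the same route as the paper's own proof: the same basis-element correspondence $\langle\phi_i^2||\phi_j^1\rangle \leftrightarrow |\phi_i^2\rangle\langle\phi_i^2|\odot|\phi_j^1\rangle\langle\phi_j^1|$ extended by linearity, with the isometry verified by the identical computation $\langle\langle\Phi|\Psi\rangle\rangle=(\Phi|\Psi)_s=\sum_{ijkl}\alpha_{ij}\alpha^*_{kl}\delta_{ik}\delta_{jl}$. If anything, you are slightly more careful than the paper, which does not address the well-definedness, basis-independence, or positive-definiteness points you flag.
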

\begin{proof}
To prove isomorphism of two spaces, one needs to show that there exists a bijective correspondence between vectors $\mathcal{M}\ni|\psi\rangle\rangle\leftrightarrow|\psi)\in\mathcal{E}$ such that the following conditions hold:\\
1. If $|\Psi\rangle\rangle\leftrightarrow|\Psi)$ and $|\Phi\rangle\rangle\leftrightarrow|\Phi)$, then $|\Psi\rangle\rangle+|\Phi\rangle\rangle \leftrightarrow|\Psi)+|\Phi)$. There holds a natural bijective correspondence that keeps also additivity for the vectors, i.e.:
\begin{eqnarray}
|\Psi_{t_{2}t_{1}}\rangle\rangle=\sum_{ij}\alpha_{ij}\langle\phi_{i}^{2}||\phi_{j}^{1}\rangle &\leftrightarrow& |\Psi_{t_{2}t_{1}})=\sum_{ij}\alpha_{ij}|\phi_{i}^{2}\rangle\langle\phi_{i}^{2}|\odot|\phi_{j}^{1}\rangle\langle \phi_{j}^{1}| \nonumber\\
|\Phi_{t_{2}t_{1}}\rangle\rangle=\sum_{ij}\beta_{ij}\langle\phi_{i}^{2}||\phi_{j}^{1}\rangle &\leftrightarrow& |\Phi_{t_{2}t_{1}})=\sum_{ij}\beta_{ij}|\phi_{i}^{2}\rangle\langle\phi_{i}^{2}|\odot|\phi_{j}^{1}\rangle\langle \phi_{j}^{1}| \nonumber
\end{eqnarray}
then $|\Psi_{t_{2}t_{1}}\rangle\rangle+|\Phi_{t_{2}t_{1}}\rangle\rangle \leftrightarrow|\Psi_{t_{2}t_{1}})+|\Phi_{t_{2}t_{1}})$.
This result can be easily reapplied for multi-time vectors: \\
2. If $|\Psi\rangle\rangle\leftrightarrow|\Psi)$, then $\lambda|\Psi\rangle\rangle\leftrightarrow\lambda|\Psi)$ for any $\lambda\in\mathcal{C}$ and $|\lambda|=1$ (where $\lambda$ represents a phase factor). Due to the correspondence: $|\Psi_{t_{2}t_{1}}\rangle\rangle=\sum_{ij}\alpha_{ij}\langle\phi_{i}^{2}||\phi_{j}^{1}\rangle \leftrightarrow |\Psi_{t_{2}t_{1}})=\sum_{ij}\alpha_{ij}|\phi_{i}^{2}\rangle\langle\phi_{i}^{2}|\odot|\phi_{j}^{1}\rangle\langle \phi_{j}^{1}|$
with complex $\alpha_{ij}$ and $\beta_{ij}$, there holds: $\lambda |\Psi_{t_{2}t_{1}}\rangle\rangle \leftrightarrow \lambda|\Psi_{t_{2}t_{1}})$.\\
3. For inner products, one gets $\langle\langle\Psi|\Phi\rangle\rangle=(\Psi|\Phi)_{s}$ with the assumed correspondence of the vectors. As shown for two-time states:
$\langle\langle\Phi|\Psi\rangle\rangle=(\Phi|\Psi)_s=\sum_{ijkl}\alpha_{ij}\alpha^{*}_{kl}\delta_{ik}\delta_{jl}$. The same result can be easily achieved for multi-time states and histories by extension.
\end{proof}







\section{Discussion} \label{sec4}
The Entangled Histories formalism and the Two-State-Vector formalism both try to capture the uniqueness of quantum histories  by allowing them to be superposed. This is in contrast with the Consistent Histories formalism which rather seeks for the cases where the Histories decohere. However, treating such superposition states and assigning probabilities to them were shown to be conceptually helpful \cite{WC3,Now2,CN,WC1,WC2}, especially in cases where the Consistent Histories formalism cannot do so \cite{Dis,CN,VaidCH}. Moreover, they have been recently shown to be very fruitful for studying various problem starting from quantum paradoxes \cite{Paradoxes} to the past of quantum particles \cite{Vaid1,Vaid2} and finally black hole information \cite{CN} and spacetime entropies/channels \cite{CotNew}.

In addition, the two approaches we discussed are intrinsically time-symmetric, reflecting the apparent reversibility in the microscopic world, therefore making them more appealing for us than other approaches. We have seen that apart from the backwards-in-time-evolution embedded within the TSVF, these approaches are quite similar in spirit. In fact, we showed that they can made isomorphic, when properly defining the inner product in the EH approach. We then saw that many notions usually ascribed to spatial quantum correlations apply to temporal correlations as well (see also \cite{Now2}), implying non-locality in time. While the latter makes these approaches particularly suitable for studying processes with indefinite causal order \cite{NewSandu}, quantum uncertainty prevents any violations of signaling in time and thus causality is always preserved \cite{ACE}.

As shown in the Appendix and in \cite{WC3} the above can be experimentally tested. For experimental demonstration purposes, sequential weak measurements were recently suggested as well \cite{GC}. The two frameworks emphasize a unique phenomenon in quantum mechanics, allowing histories to become entangled, thereby defying the classical notion of history and maybe even the classical notion of time itself.

Let us stress that the equivalence shown here
between the Entangled Histories and TSVF
raises a natural question about general probabilistic theories (GPT)
\cite{X,XX,XXX}. It is not impossible that
the entangled histories concept might be implemented within the GPT formalism due to its tensor-like structure
(especially because the dynamics in standard GPT is
well defined as a sort of mapping). In fact, a related time-symmetric approach has already been developed \cite{Ore1,Ore2}. If so, then an intriguing question would be whether there is any two-state
formalism and, perhaps, some analog of a wave function, going
beyond quantum mechanics.
This, however, requires more research, especially
as it may depend on the axioms chosen for the considered GPT.
Another possible area for further analysis might be related to
no-signaling boxes \cite{PRboxes}, e.g. finding their temporal analogs in a generalized entangled histories formalism. However, it seems that one would need
some extra structure, since apparently there is no reasonable dynamical structure
for this model at the moment.

\section*{Acknowledgements}

We wish to thank Yakir Aharonov, Jordan Cotler, Marek Czachor, Danko Georgiev, Sky Nelson and Sandu Popescu for helpful comments and discussions. M.N. and P.H. were supported by a grant from the John Templeton Foundation. E.C. was supported by the Canada Research Chairs (CRC) Program.  The opinions expressed in this publication are those of the authors and do not necessarily reflect the views of the John Templeton Foundation.

\section*{Appendix}
We present below a protocol for generation of $|\tau GHZ)$ state, i.e. a GHZ state of histories, that can be implemented experimentally with a Mach-Zehnder interferometer and a set of detectors \cite{WC3}:
\begin{equation}
|\tau GHZ)=\frac{1}{\sqrt{2}}([z^{+}]\odot[z^{+}]\odot[z^{+}]-[z^{-}]\odot[z^{-}]\odot[z^{-}])
\end{equation}
We start with a bipartite system at time $t_0$ consisting of a spin-$\frac{1}{2}$ particle P being in a state $|\phi_{0}\rangle=\frac{1}{\sqrt{2}}(|z^+\rangle+|z^-\rangle)$ ($|\phi_{0}\rangle=|x^+\rangle$) and a reference system R, consisting of three qubits in a state $|000\rangle$, which can be actually perceived as a clock for the process. Thus, at time $t_0$ the system PR is in a state (for states at each particular time, we write down the spatial state of the system in the $|\cdot\rangle$ notation):
\begin{equation}
t_0:\; |\Psi_{t_0}\rangle_{PR}=\frac{1}{\sqrt{2}}[(|z^+\rangle+|z^-\rangle)]|000\rangle
\end{equation}
Then, at a later time $t_1$ we act on the system with the $CNOT$ unitary operation where the control system is the particle and negation is performed on \textit{the first} qubit of the reference system R (the $CNOT$ operation changes the reference qubit if the controlled state is $|z^-\rangle$), basing on the state of the particle (we will repeat this action at time $t_2$ on the second qubit, and at $t_3$ on the third qubit):
\begin{equation}
t_1:\; |\Psi_{t_1}\rangle_{PR}=CNOT_{PR_1}\otimes \mathbb{I}_{R_{2}R_{3}}|\Psi_{t_0}\rangle_{PR}=\frac{1}{\sqrt{2}}|z^+\rangle|000\rangle+\frac{1}{\sqrt{2}}|z^-\rangle|100\rangle
\end{equation}
where $CNOT_{PR_1}$ acts on the particle and the first qubit of the reference system.\\
At time $t_2$ we act on the particle and the second qubit of the reference system achieving:
\begin{equation}
t_2:\; |\Psi_{t_2}\rangle_{PR}=CNOT_{PR_2}\otimes \mathbb{I}_{R_{1}R_{3}}|\Psi_{t_1}\rangle_{PR}=\frac{1}{\sqrt{2}}|z^+\rangle|000\rangle+\frac{1}{\sqrt{2}}|z^-\rangle|110\rangle
\end{equation}
Finally, at time $t_3$ we repeat this operation but on the particle and the third qubit of the reference system:
\begin{equation}
t_3:\; |\Psi_{t_3}\rangle_{PR}=CNOT_{PR_3}\otimes \mathbb{I}_{R_{1}R_{2}}|\Psi_{t_2}\rangle_{PR}=\frac{1}{\sqrt{2}}|z^+\rangle|000\rangle+\frac{1}{\sqrt{2}}|z^-\rangle|111\rangle
\end{equation}
After this step, we can measure the reference system in the computational basis $\{|000\rangle, |001\rangle, \ldots, |111\rangle\}$. If we measure the reference system projecting on $|000\rangle$ then particle has been in the history $[z^{+}]\odot[z^{+}]\odot[z^{+}]$. If we project it on $|111\rangle$, then the history of the particle (with which we correlate) has been in $[z^{-}]\odot[z^{-}]\odot[z^{-}]$.

Finally, if we project the reference system on $\frac{1}{\sqrt{2}}(|000\rangle-|111\rangle)$, we find that the particle has been in the history state $|\tau GHZ)=\frac{1}{\sqrt{2}}([z^{+}]\odot[z^{+}]\odot[z^{+}]-[z^{-}]\odot[z^{-}]\odot[z^{-}])$.

Alternatively, using MTS, the entangling role of the R system would be the same and also the sequence of measurements. The main difference is the alternating forward/backward evolution of the corresponding ket/bra states. This becomes clear when describing the above evolution in the MTS formalism as $\frac{1}{\sqrt{2}}\left ( _{t_f}\langle x^-|_{t_3}\langle z^+||z^+\rangle_{t_2~t_1}\langle z^+||x^+\rangle_{t_0}+_{t_f}\langle x^-|_{t_3}\langle z^-||z^-\rangle_{t_2~t_1}\langle z^-||x^+\rangle_{t_0} \right )$.
We note that although the initial and final states are orthogonal there is no conflict with the postulates of the MTS as projective measurements were performed during the time evolution of the system.

\end{document}